\documentclass[11pt]{article}
\usepackage[margin=1in]{geometry}
\usepackage{amsmath,amssymb,amsthm,mathtools}
\usepackage{bm}
\usepackage{booktabs}
\usepackage{enumitem}
\usepackage{microtype}
\usepackage[hidelinks]{hyperref}
\usepackage[nameinlink,noabbrev]{cleveref}

\setlist[itemize]{leftmargin=1.6em,itemsep=0.25em,topsep=0.25em}

\newtheorem{theorem}{Theorem}[section]
\newtheorem{proposition}[theorem]{Proposition}

\newtheorem{corollary}[theorem]{Corollary}
\theoremstyle{remark}

\newtheorem{example}[theorem]{Example}

\newcommand{\R}{\mathbb{R}}
\newcommand{\E}{\mathbb{E}}
\newcommand{\Q}{\mathbb{Q}}

\title{Gaussian Normalized Coordinates and Risk-Neutral CDF Deformations}
\author{Jian Sun}
\date{September 2026}

\begin{document}
\maketitle

\begin{abstract}
Normalized implied-volatility coordinates reveal no-arbitrage structure that is less transparent in strike space. This paper centers on a single quantity that gives those coordinates a direct probabilistic interpretation. If $h(z)$ denotes the risk-neutral CDF expressed in an increasing Gaussian normalized coordinate $z$, define the Gaussian CDF deformation
\begin{equation*}
  \eta(z)=\frac{h(z)-\Phi(z)}{\phi(z)}.
\end{equation*}
Thus
\begin{equation*}
  h(z)=\Phi(z)+\phi(z)\eta(z),
\end{equation*}
and $\eta$ measures the pointwise departure of the option-implied CDF from its Gaussian benchmark, normalized by the Gaussian density. Its differential transform
\begin{equation*}
  m(z)=1+\eta'(z)-z\eta(z)
\end{equation*}
then satisfies
\begin{equation*}
  h'(z)=\phi(z)m(z),
\end{equation*}
so $m$ is the option-implied density in normalized coordinates relative to the standard normal density. No-arbitrage becomes $m\ge0$, together with explicit Mills-ratio bounds on $\eta$.

The Bachelier and Black conventions give two concrete realizations of this same deformation. In Bachelier coordinates, $\eta$ is exactly the strike derivative of total normal implied standard deviation. If $q=1/s$ is reciprocal implied scale in the normalized coordinate $y=(K-F)/s(K)$, butterfly convexity is equivalent to the linear inequality
\begin{equation*}
  q''(y)+yq'(y)-q(y)\le0.
\end{equation*}
In Black coordinates, with the Fukasawa coordinates $a=k/v-v/2$ and $b=k/v+v/2$, the same CDF deformation is the slope $v'(k)$ of total implied volatility. The corresponding Gaussian density deformation reproduces the familiar Black butterfly factor. We also derive the associated equality families, one-coordinate reconstruction, and local-volatility formulas. The paper therefore interprets implied-volatility skew as a normalized deformation of the risk-neutral distribution rather than merely as a slope parameter.
\end{abstract}

\textbf{Keywords:} implied volatility; no-arbitrage; Gaussian normalized coordinates; risk-neutral distribution; Bachelier model; Black model; butterfly arbitrage; local volatility.

\textbf{JEL classification:} G13; C02.

\section{Introduction}
\label{sec:intro}

Implied volatility is quoted as a function of strike, but many of its structural properties become simpler after strike is normalized by the implied scale itself. In the Black convention, Fukasawa~\cite{Fukasawa2012} introduced the normalizing transformations
\begin{equation*}
  f_1(k)=\frac{k}{v(k)}-\frac{v(k)}{2},
  \qquad
  f_2(k)=\frac{k}{v(k)}+\frac{v(k)}{2},
\end{equation*}
where $k=\log(K/F)$ and $v(k)$ is total Black implied volatility. These are the negatives of the usual Black quantities $d_1$ and $d_2$. Their monotonicity under no-arbitrage, together with related slope bounds and one-coordinate parameterizations, has been studied by Fukasawa~\cite{Fukasawa2012}, Lucic~\cite{Lucic2021,Lucic2023}, and others. A finite-strike order-theoretic treatment, including the Bachelier analogue, is developed in the companion preprint Sun~\cite{Sun2026}.

The present paper studies the second-order structure of these normalized coordinates. The central object is not a differential operator by itself, but a function that measures how the option-implied distribution differs from its Gaussian benchmark. Let $z$ be an increasing normalized coordinate and let $h(z)$ denote the risk-neutral CDF expressed in that coordinate. We define
\begin{equation*}
  \eta(z)=\frac{h(z)-\Phi(z)}{\phi(z)}.
\end{equation*}
Equivalently,
\begin{equation*}
  h(z)=\Phi(z)+\phi(z)\eta(z).
\end{equation*}
Thus $\eta$ is a Gaussian-normalized CDF deformation. The Gaussian benchmark itself is characterized by $\eta\equiv0$.

Differentiation produces a second deformation,
\begin{equation*}
  m(z)=1+\eta'(z)-z\eta(z),
\end{equation*}
for which
\begin{equation*}
  h'(z)=\phi(z)m(z).
\end{equation*}
Hence $m$ is the density in the normalized coordinate relative to the standard normal density. The two functions have complementary interpretations:
\begin{equation*}
  \eta
  \quad\text{measures CDF deformation,}\qquad
  m
  \quad\text{measures density deformation.}
\end{equation*}
Static no-arbitrage requires $m\ge0$, while the digital bounds $0\le h\le1$ give exact Mills-ratio barriers for $\eta$. Moreover, the inequality $m\ge0$ can be solved completely: admissible $\eta$ are in one-to-one correspondence with nondecreasing CDFs, or equivalently with nonnegative Gaussian-relative densities of total mass one.

The significance for implied volatility is that $\eta$ becomes an observable slope in both classical Gaussian pricing conventions. In Bachelier coordinates, let $s(K)$ be total normal implied standard deviation and set
\begin{equation*}
  y(K)=\frac{K-F}{s(K)}.
\end{equation*}
Then
\begin{equation*}
  \eta(y)=s'(K(y)).
\end{equation*}
Thus Bachelier implied-volatility skew is exactly the normalized difference between the risk-neutral CDF and the standard normal CDF. Writing the reciprocal implied scale as $q(y)=1/s(y)$ further turns density positivity into the linear differential inequality
\begin{equation*}
  q''(y)+yq'(y)-q(y)\le0.
\end{equation*}

In Black coordinates, let $v=v(k)$ denote total Black implied volatility as a function of log-forward moneyness $k=\log(K/F)$ and define
\begin{equation*}
  a(k)=\frac{k}{v(k)}-\frac{v(k)}{2},
  \qquad
  b(k)=\frac{k}{v(k)}+\frac{v(k)}{2}.
\end{equation*}
The corresponding CDF deformations are again given by implied-volatility slope:
\begin{equation*}
  \eta=v'(k)=\frac{dv(k)}{dk}.
\end{equation*}
The pricing-measure CDF in the $b$ coordinate and the share-measure CDF in the $a$ coordinate are therefore two Gaussian descriptions of the same skew function. Their density deformation reproduces the familiar Black butterfly factor.

The paper is organized around this interpretation of $\eta$. Section~\ref{sec:gaussian} develops the CDF deformation, its exact bounds, its density transform, and the complete solution of the no-arbitrage inequality. Section~\ref{sec:bachelier} shows that $\eta$ is Bachelier implied-volatility slope and derives the reciprocal-scale linearization. Section~\ref{sec:black} gives the parallel Black interpretation in the two Fukasawa coordinates. Section~\ref{sec:reconstruction} treats $\eta$ as the primitive input of a one-coordinate Black reconstruction. Section~\ref{sec:surface} shows how the same density deformation enters local-volatility formulas. Section~\ref{sec:dictionary} summarizes the Black--Bachelier correspondence.

\section{Gaussian CDF deformation and density deformation}
\label{sec:gaussian}

We write $\phi$ and $\Phi$ for the standard normal density and distribution function. Since $\phi(-z)=\phi(z)$, define the Mills ratio by
\begin{equation}
\label{eq:mills}
  R(z)=\frac{\Phi(-z)}{\phi(-z)}
  =\frac{\Phi(-z)}{\phi(z)}.
\end{equation}

Let $h(z)$ be an option-implied CDF expressed in an increasing normalized coordinate $z$. The central definition of the paper is
\begin{equation}
\label{eq:eta-def}
  \eta(z)=\frac{h(z)-\Phi(z)}{\phi(z)}.
\end{equation}
Equivalently,
\begin{equation}
\label{eq:gaussian-cdf-form}
  h(z)=\Phi(z)+\phi(z)\eta(z).
\end{equation}
The sign of $\eta$ has an immediate interpretation: $\eta(z)>0$ means that the risk-neutral CDF has accumulated more probability to the left of $z$ than the Gaussian benchmark, while $\eta(z)<0$ means that it has accumulated less.

\begin{theorem}[Gaussian CDF and density deformations]
\label{thm:gaussian-skew}
Assume that $h$ and $\eta$ are differentiable and related by \eqref{eq:gaussian-cdf-form}. Define
\begin{equation}
\label{eq:M-operator}
  m(z)=\mathcal M_z(\eta)
  :=1+\eta'(z)-z\eta(z).
\end{equation}
Then
\begin{equation}
\label{eq:H-z}
  h'(z)=\phi(z)m(z).
\end{equation}
Consequently,
\begin{equation*}
  h'(z)\ge0
  \quad\Longleftrightarrow\quad
  m(z)\ge0.
\end{equation*}
Moreover,
\begin{equation}
\label{eq:density-ratio}
  m(z)=\frac{h'(z)}{\phi(z)},
\end{equation}
so $m$ is the option-implied density in the normalized coordinate divided by the standard normal density.
\end{theorem}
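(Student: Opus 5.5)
The plan is to differentiate the representation \eqref{eq:gaussian-cdf-form} directly, using only the standard Gaussian identities $\Phi'(z)=\phi(z)$ and $\phi'(z)=-z\phi(z)$. Since $h$ and $\eta$ are assumed differentiable and $\Phi,\phi$ are smooth, the product rule applies to $\phi\eta$. This gives
\begin{equation*}
  h'(z)=\phi(z)+\phi'(z)\eta(z)+\phi(z)\eta'(z)
  =\phi(z)+\phi(z)\eta'(z)-z\phi(z)\eta(z).
\end{equation*}
Factoring out $\phi(z)$ yields $h'(z)=\phi(z)\bigl(1+\eta'(z)-z\eta(z)\bigr)=\phi(z)m(z)$ with $m=\mathcal M_z(\eta)$ as in \eqref{eq:M-operator}. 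This is exactly \eqref{eq:H-z}.

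The two consequences then follow from the strict positivity $\phi(z)>0$ for every real $z$. First, multiplication by a strictly positive number preserves sign. Hence $h'(z)\ge0$ holds if and only if $m(z)\ge0$, pointwise in $z$. Second, dividing \eqref{eq:H-z} by $\phi(z)\neq0$ gives \eqref{eq:density-ratio}.

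For the interpretive sentence, note that $h$ is the CDF of the risk-neutral law expressed in the coordinate $z$. Its derivative $h'$ is therefore the density in that coordinate, and $\phi$ is the standard normal density. So $m$ is the likelihood ratio of the option-implied law against the Gaussian benchmark.

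There is no genuine obstacle here. The only point needing care is to make sure the differentiability hypotheses justify the product rule, which they do as stated. One should also remark that $z$ ranges over $\R$, or an interval of it, so that $\phi(z)>0$ throughout.
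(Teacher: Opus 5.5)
Your proposal is correct and follows the paper's proof: differentiate $h=\Phi+\phi\eta$ with the product rule and $\phi'(z)=-z\phi(z)$, then factor out $\phi(z)$. The strict positivity $\phi(z)>0$ then gives both the sign equivalence and the ratio formula $m=h'/\phi$.
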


\begin{proof}
Differentiate \eqref{eq:gaussian-cdf-form} and use $\phi'(z)=-z\phi(z)$:
\begin{equation*}
  h'(z)
  =\phi(z)+\phi'(z)\eta(z)+\phi(z)\eta'(z)
  =\phi(z)\bigl(1+\eta'(z)-z\eta(z)\bigr).
\end{equation*}
Since $\phi(z)>0$, the sign equivalence and \eqref{eq:density-ratio} follow.
\end{proof}

The Gaussian benchmark is therefore the distinguished origin of the deformation representation:
\begin{equation*}
  h=\Phi
  \quad\Longleftrightarrow\quad
  \eta=0,
  \qquad
  m=1.
\end{equation*}
Thus $\eta$ measures cumulative departure from Gaussianity, while $m-1$ measures local density departure from Gaussianity.

\begin{proposition}[Complete characterization of the Gaussian no-arbitrage inequality]
\label{prop:complete-M}
Let $\eta$ be differentiable and define $h$ by \eqref{eq:gaussian-cdf-form}. Then
\begin{equation*}
  \mathcal M_z(\eta)\ge0
  \quad\Longleftrightarrow\quad
  h \text{ is nondecreasing}.
\end{equation*}
If, in addition,
\begin{equation*}
  h(-\infty)=0,
  \qquad
  h(+\infty)=1,
\end{equation*}
then $m=\mathcal M_z(\eta)$ satisfies
\begin{equation}
\label{eq:m-normalization}
  m(z)\ge0,
  \qquad
  \int_{-\infty}^{\infty}m(u)\phi(u)\,du=1.
\end{equation}
Conversely, any locally integrable $m$ satisfying \eqref{eq:m-normalization} defines a CDF and a deformation by
\begin{equation}
\label{eq:h-from-m}
  h(z)=\int_{-\infty}^{z}m(u)\phi(u)\,du,
\end{equation}
\begin{equation}
\label{eq:eta-from-m}
  \eta(z)
  =\frac{\displaystyle\int_{-\infty}^{z}\phi(u)\bigl(m(u)-1\bigr)\,du}{\phi(z)},
\end{equation}
and this deformation satisfies $\mathcal M_z(\eta)=m$ almost everywhere.
\end{proposition}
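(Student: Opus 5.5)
The plan is to reduce everything to Theorem~\ref{thm:gaussian-skew}. Since $\eta$ is differentiable, $h=\Phi+\phi\eta$ is differentiable and $h'=\phi\,\mathcal M_z(\eta)$. For the first equivalence I use the standard fact that a differentiable function on an interval is nondecreasing if and only if its derivative is nonnegative everywhere. One direction is the mean value theorem. The other holds because difference quotients of a nondecreasing function are nonnegative, and hence so is their limit. Because $\phi>0$, the condition $h'\ge0$ is the same as $m\ge0$.

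For the normalization \eqref{eq:m-normalization}, nonnegativity of $m$ follows from the first part, since $h$ is now assumed to be a CDF and in particular nondecreasing. For the integral, I would compute $\int_{-A}^{B} m\phi\,du=\int_{-A}^{B}h'(u)\,du=h(B)-h(-A)$ and then let $A,B\to\infty$, which gives $1$ by the limits of $h$.

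This step is the main technical obstacle. The fundamental theorem of calculus for a derivative that exists everywhere but may fail to be Riemann integrable is delicate. I would handle it by monotonicity: $h$ is nondecreasing and differentiable everywhere, so $h$ is continuous and locally of bounded variation. A differentiable function with $h'\ge0$ and $h'\in L^1_{\mathrm{loc}}$ satisfies $h(B)-h(-A)=\int_{-A}^B h'$. This is the known result that an everywhere-differentiable function with Lebesgue-integrable derivative is the integral of its derivative. To make $h'$ locally integrable in the first place, I would use Fatou's lemma: since $h'\ge0$ is the pointwise limit of nonnegative difference quotients, $\int_{-A}^B h'\le h(B)-h(-A)<\infty$. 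This already gives the inequality $\int m\phi\le 1$, and the cited theorem upgrades it to equality. Alternatively, if the paper's standing smoothness allows, I would simply assume $h'$ is continuous, in which case the computation is immediate.

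For the converse, let $m\ge0$ be locally integrable with $\int m\phi=1$, and define $h$ by \eqref{eq:h-from-m}. This $h$ is absolutely continuous and nondecreasing, with $h(-\infty)=0$ and $h(+\infty)=1$, so it is a CDF. Define $\eta=(h-\Phi)/\phi$; using $\Phi(z)=\int_{-\infty}^z\phi$, this is exactly \eqref{eq:eta-from-m}. By Lebesgue differentiation, $h'=m\phi$ almost everywhere. Since $\eta$ is a quotient of absolutely continuous functions by the smooth positive $\phi$, it is locally absolutely continuous. Differentiating gives
\[
\eta'=\frac{h'-\phi}{\phi}+z\,\frac{h-\Phi}{\phi}=m-1+z\eta
\]
almost everywhere. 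Hence $\mathcal M_z(\eta)=1+\eta'-z\eta=m$ almost everywhere, which is the same computation as in Theorem~\ref{thm:gaussian-skew} run backwards.
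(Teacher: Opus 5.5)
Your proposal is correct and follows the same route as the paper. The equivalence comes from $h'=\phi\,\mathcal M_z(\eta)$ in Theorem~\ref{thm:gaussian-skew}, the normalization from integrating $h'=\phi m$ over $\R$, and the converse from defining $h$ by \eqref{eq:h-from-m} and subtracting $\Phi=\int_{-\infty}^{z}\phi$. Your extra care supplies details the paper's short proof leaves implicit: Fatou plus the fundamental theorem of calculus for everywhere-differentiable functions with integrable derivative, and in the converse the almost-everywhere computation via absolute continuity of $\eta$, where the differentiability hypothesis of Theorem~\ref{thm:gaussian-skew} is not available.
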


\begin{proof}
The first equivalence follows from Theorem~\ref{thm:gaussian-skew}. Under the CDF boundary conditions, integrating $h'(z)=\phi(z)m(z)$ over $\mathbb R$ gives \eqref{eq:m-normalization}. Conversely, \eqref{eq:h-from-m} defines a nondecreasing function with limits zero and one. Since
\begin{equation*}
  \Phi(z)=\int_{-\infty}^{z}\phi(u)\,du,
\end{equation*}
subtracting $\Phi(z)$ from \eqref{eq:h-from-m} and dividing by $\phi(z)$ gives \eqref{eq:eta-from-m}. Theorem~\ref{thm:gaussian-skew} then gives $\mathcal M_z(\eta)=m$ almost everywhere.
\end{proof}

Proposition~\ref{prop:complete-M} shows that the inequality $\mathcal M_z(\eta)\ge0$ is completely solvable. An admissible $\eta$ is equivalent to a nonnegative Gaussian-relative density $m$ of unit Gaussian-weighted mass. The function $\eta$ is the cumulative effect of the local density distortion $m-1$.

\begin{proposition}[Mills-ratio barriers and zero-density curves]
\label{prop:mills-zero-density}
The digital bounds
\begin{equation*}
  0\le h(z)\le1
\end{equation*}
are equivalent to
\begin{equation}
\label{eq:mills-bounds}
  -R(-z)\le\eta(z)\le R(z).
\end{equation}
Moreover, the solutions of
\begin{equation*}
  \mathcal M_z(\eta)=0
\end{equation*}
are
\begin{equation}
\label{eq:M-equality}
  \eta_c(z)
  =\frac{c-\Phi(z)}{\phi(z)}
  =cR(z)-(1-c)R(-z),
\end{equation}
where $c$ is constant. Along such a curve, $h(z)\equiv c$.
\end{proposition}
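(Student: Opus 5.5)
The plan is to use the representation \eqref{eq:gaussian-cdf-form} directly and to divide by $\phi(z)>0$, which preserves inequalities. For the barrier statement I will rewrite $0\le h(z)$ as $\Phi(z)+\phi(z)\eta(z)\ge0$, which is equivalent to
\begin{equation*}
  \eta(z)\ge-\frac{\Phi(z)}{\phi(z)}.
\end{equation*}
By \eqref{eq:mills} with $z$ replaced by $-z$, we have $R(-z)=\Phi(z)/\phi(-z)=\Phi(z)/\phi(z)$, so the left inequality is exactly $\eta\ge-R(-z)$. Similarly, $h(z)\le1$ is equivalent to
\begin{equation*}
  \eta(z)\le\frac{1-\Phi(z)}{\phi(z)}=\frac{\Phi(-z)}{\phi(z)}=R(z).
\end{equation*}
Each step is a reversible equivalence, so \eqref{eq:mills-bounds} follows in both directions.

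For the zero-density curves I will invoke Theorem~\ref{thm:gaussian-skew}, which gives $h'(z)=\phi(z)\mathcal M_z(\eta)$. Hence $\mathcal M_z(\eta)=0$ on $\R$ holds if and only if $h'\equiv0$. Since $\R$ is connected and $h$ is differentiable, this is equivalent to $h\equiv c$ for some constant $c$, by the mean value theorem. Solving \eqref{eq:gaussian-cdf-form} for $\eta$ then gives $\eta_c(z)=(c-\Phi(z))/\phi(z)$.

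Conversely, I will check that every $\eta_c$ of this form yields $h\equiv c$, so that $\mathcal M_z(\eta_c)=0$. Alternatively, one can verify this directly from $\eta_c'=-1+z\eta_c$, which follows from $\phi'=-z\phi$.

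Finally, for the second form in \eqref{eq:M-equality} I will split the constant as $c-\Phi(z)=c(1-\Phi(z))-(1-c)\Phi(z)=c\,\Phi(-z)-(1-c)\Phi(z)$. Dividing by $\phi(z)$ and using the two Mills-ratio identities above gives $cR(z)-(1-c)R(-z)$.

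None of these steps poses a real obstacle. The only point needing care is the identification $R(-z)=\Phi(z)/\phi(z)$, which relies on the symmetry $\phi(-z)=\phi(z)$. I will also note that the solution set is global: "solutions" means solutions on all of $\R$, and on a disconnected domain the constant $c$ could differ from component to component.
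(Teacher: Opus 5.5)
Your proof is correct and follows the paper's argument: divide $0\le\Phi(z)+\phi(z)\eta(z)\le1$ by $\phi(z)>0$ for the Mills barriers, then use $h'=\phi\,\mathcal M_z(\eta)$ to get $h\equiv c$ and solve for $\eta$. You also spell out several steps the paper leaves implicit: the symmetry identification $R(-z)=\Phi(z)/\phi(z)$, the converse check via $\eta_c'=-1+z\eta_c$, the splitting that gives the form $cR(z)-(1-c)R(-z)$, and the observation that on an interval (or any connected domain) the constant $c$ is single-valued.
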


\begin{proof}
Solving
\begin{equation*}
  0\le\Phi(z)+\phi(z)\eta(z)\le1
\end{equation*}
for $\eta$ gives \eqref{eq:mills-bounds}. If $\mathcal M_z(\eta)=0$, then Theorem~\ref{thm:gaussian-skew} gives $h'(z)=0$, so $h(z)=c$. Solving \eqref{eq:gaussian-cdf-form} for $\eta$ gives \eqref{eq:M-equality}.
\end{proof}

The equality curves have a direct financial meaning. When $z$ is increasing in strike, $m=0$ on an interval means that the corresponding interval carries zero risk-neutral density. Thus the Mills barriers control the admissible location of the CDF deformation, while the curves \eqref{eq:M-equality} describe the local zero-density boundary.

\section{Bachelier coordinates: implied-volatility slope as CDF deformation}
\label{sec:bachelier}

The general deformation $\eta$ acquires an especially transparent meaning in the Bachelier convention: it is exactly the strike derivative of total normal implied standard deviation. Thus Bachelier implied-volatility skew is not merely a slope parameter; it is the Gaussian-normalized displacement of the risk-neutral CDF.

We work first at a fixed maturity. Let $F\in\R$ be the forward and let $C(K)$ denote the undiscounted call price. The Bachelier total normal implied standard deviation $s(K)>0$ is defined as the unique value satisfying
\begin{equation}
\label{eq:bachelier-implied-s}
  C(K)
  =(F-K)\Phi\left(\frac{F-K}{s(K)}\right)
  +s(K)\phi\left(\frac{F-K}{s(K)}\right).
\end{equation}
Define the increasing normalized coordinate
\begin{equation}
\label{eq:bachelier-y}
  y(K)=\frac{K-F}{s(K)}.
\end{equation}
The finite-strike monotonicity of $y(K)$ under static no-arbitrage is established in Sun~\cite{Sun2026}. We therefore assume that $y(K)$ is strictly increasing, write $K=K(y)$ for its inverse, and define
\begin{equation}
\label{eq:s-of-y}
  s(y):=s(K(y)).
\end{equation}

Define the unit-scale Bachelier call and put kernels
\begin{equation}
\label{eq:bachelier-kernels}
  c(y)=\phi(y)-y\Phi(-y),
  \qquad
  p(y)=\phi(y)+y\Phi(y).
\end{equation}
Then
\begin{equation*}
  C(K)=s(K)c(y(K)),
  \qquad
  P(K)=s(K)p(y(K)).
\end{equation*}

\subsection{From the Bachelier price to the CDF deformation}

\begin{proposition}[Bachelier CDF deformation]
\label{prop:bachelier-cdf}
Assume that $s(K)$ and $C(K)$ are differentiable. Define the risk-neutral CDF in the normalized coordinate by
\begin{equation}
\label{eq:bachelier-h-def}
  h(y)=\Q\bigl(S_T\le K(y)\bigr),
\end{equation}
and define
\begin{equation}
\label{eq:bachelier-eta-def}
  \eta(y)=s'(K(y)).
\end{equation}
Then
\begin{equation}
\label{eq:bachelier-C-prime}
  C'(K)=s'(K)\phi(y)-\Phi(-y),
\end{equation}
and
\begin{equation}
\label{eq:bachelier-cdf}
  h(y)=\Phi(y)+\phi(y)\eta(y).
\end{equation}
Equivalently,
\begin{equation}
\label{eq:bachelier-eta-cdf}
  \eta(y)
  =\frac{h(y)-\Phi(y)}{\phi(y)}
  =s'(K(y)).
\end{equation}
\end{proposition}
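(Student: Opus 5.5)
The plan is to differentiate the Bachelier pricing identity $C(K)=s(K)\,c(y(K))$ in $K$ and then use the Breeden--Litzenberger relation $C'(K)=-\Q(S_T>K)=h(y)-1$. I will do this in three steps.

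\textbf{Step 1: differentiate the price.} From \eqref{eq:bachelier-kernels} and $\phi'(y)=-y\phi(y)$ we get
\begin{equation*}
  c'(y)=-y\phi(y)-\Phi(-y)+y\phi(y)=-\Phi(-y).
\end{equation*}
The chain rule then gives
\begin{equation*}
  C'(K)=s'(K)\,c(y)+s(K)\,c'(y)\,y'(K),
\end{equation*}
where
\begin{equation*}
  y'(K)=\frac{1-y\,s'(K)}{s(K)}.
\end{equation*}
Substituting, the factor $s(K)$ cancels in the second term, and we obtain
\begin{equation*}
  C'(K)=s'(K)\bigl(\phi(y)-y\Phi(-y)\bigr)-\Phi(-y)\bigl(1-y\,s'(K)\bigr)=s'(K)\phi(y)-\Phi(-y).
\end{equation*}
The cross terms $\mp y\,s'(K)\Phi(-y)$ cancel, and this is \eqref{eq:bachelier-C-prime}. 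Working with the unit-scale form $s\,c(y)$ is cleaner than differentiating \eqref{eq:bachelier-implied-s} directly, although the two routes are equivalent.

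\textbf{Step 2: identify the CDF.} For undiscounted calls, $C(K)=\E[(S_T-K)^+]$, so $C'(K)=-\Q(S_T>K)=\Q(S_T\le K)-1$ wherever $C$ is differentiable. Evaluated at $K=K(y)$, this reads $C'(K(y))=h(y)-1$. Equating with Step 1 and using $1-\Phi(-y)=\Phi(y)$ gives
\begin{equation*}
  h(y)=\Phi(y)+\phi(y)\,s'(K(y))=\Phi(y)+\phi(y)\,\eta(y),
\end{equation*}
which is \eqref{eq:bachelier-cdf}.

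\textbf{Step 3: read off the deformation.} Since $\phi>0$, dividing by $\phi(y)$ gives \eqref{eq:bachelier-eta-cdf}. This matches the general definition \eqref{eq:eta-def}, so Theorem~\ref{thm:gaussian-skew} then applies in the coordinate $y$.

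The main obstacle is the probabilistic identification in Step 2, namely justifying $C'(K)=-\Q(S_T>K)$. Convexity of $C$ gives one-sided derivatives equal to $-\Q(S_T>K)$ and $-\Q(S_T\ge K)$. The assumed differentiability of $C$ makes these agree, and hence identifies $C'$ with $h-1$. The remaining manipulations are routine.
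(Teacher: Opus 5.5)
Your proposal is correct and follows the paper's proof essentially step for step. You differentiate $C(K)=s(K)c(y(K))$ using $c'(y)=-\Phi(-y)$ and $s(K)y'(K)=1-ys'(K)$, then identify $1+C'(K)$ with $\Q(S_T\le K)$; your added remark that differentiability of $C$ rules out an atom at $K$, so that the one-sided derivatives $-\Q(S_T>K)$ and $-\Q(S_T\ge K)$ agree, only makes explicit what the paper takes for granted.
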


\begin{proof}
Since $K-F=s(K)y(K)$, differentiation with respect to $K$ gives
\begin{equation*}
  1=s'(K)y+s(K)y'(K).
\end{equation*}
Also,
\begin{equation*}
  c'(y)=-\Phi(-y).
\end{equation*}
Differentiating $C(K)=s(K)c(y(K))$ therefore gives
\begin{align*}
  C'(K)
  &=s'(K)c(y)+s(K)c'(y)y'(K)\\
  &=s'(K)c(y)-\Phi(-y)\bigl(1-y s'(K)\bigr)\\
  &=s'(K)\phi(y)-\Phi(-y),
\end{align*}
which proves \eqref{eq:bachelier-C-prime}.

On the other hand, differentiability of the call-price curve gives
\begin{equation*}
  C'(K)=-\Q(S_T>K).
\end{equation*}
Hence
\begin{equation*}
  \Q(S_T\le K)=1+C'(K).
\end{equation*}
Evaluating at $K=K(y)$ and substituting \eqref{eq:bachelier-C-prime},
\begin{align*}
  h(y)
  &=1+C'(K(y))\\
  &=1-\Phi(-y)+\phi(y)s'(K(y))\\
  &=\Phi(y)+\phi(y)\eta(y).
\end{align*}
This proves \eqref{eq:bachelier-cdf} and \eqref{eq:bachelier-eta-cdf}.
\end{proof}

The proposition gives the main interpretation of the Bachelier skew:
\begin{equation*}
  s'(K(y))
  =\frac{\text{risk-neutral CDF}-\text{Gaussian CDF}}{\text{Gaussian density}}.
\end{equation*}
Thus the sign and magnitude of the Bachelier implied-volatility slope directly record how the risk-neutral distribution departs from the Gaussian benchmark in normalized coordinates.

\subsection{Reciprocal implied scale and density deformation}

Set
\begin{equation}
\label{eq:q-def}
  q(y)=\frac{1}{s(y)}.
\end{equation}
Then
\begin{equation}
\label{eq:K-y-q}
  K(y)=F+\frac{y}{q(y)},
\end{equation}
and
\begin{equation}
\label{eq:K-y-derivative}
  K'(y)=\frac{q(y)-yq'(y)}{q(y)^2}.
\end{equation}
The increasing-coordinate condition is therefore
\begin{equation}
\label{eq:bachelier-orientation}
  q(y)-yq'(y)>0.
\end{equation}

\begin{proposition}[Bachelier reciprocal-scale representation]
\label{prop:bachelier-M}
Under \eqref{eq:bachelier-orientation},
\begin{equation}
\label{eq:bachelier-eta-q}
  \eta(y)
  =-\frac{q'(y)}{q(y)-yq'(y)},
\end{equation}
and the Gaussian-relative density deformation is
\begin{equation}
\label{eq:bachelier-M-L}
  m(y)
  =\mathcal M_y(\eta)
  =-\frac{q(y)\bigl(q''(y)+yq'(y)-q(y)\bigr)}
  {\bigl(q(y)-yq'(y)\bigr)^2}.
\end{equation}
Moreover,
\begin{equation}
\label{eq:bachelier-density-q}
  C''(K)
  =-\frac{\phi(y)q(y)^3}
  {\bigl(q(y)-yq'(y)\bigr)^3}
  \bigl(q''(y)+yq'(y)-q(y)\bigr).
\end{equation}
\end{proposition}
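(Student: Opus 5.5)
The proof is a chain-rule computation in the coordinate $y$, followed by two earlier results: Theorem~\ref{thm:gaussian-skew} for the density deformation and Proposition~\ref{prop:bachelier-cdf} for the link between $h$ and $C'$.

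\textbf{Step 1: the formula for $\eta$.} By \eqref{eq:bachelier-eta-def}, $\eta(y)=s'(K(y))$, and the chain rule gives
\begin{equation*}
  s'(K(y))=\frac{ds(y)/dy}{K'(y)}.
\end{equation*}
Since $s=1/q$, we have $ds/dy=-q'/q^2$. Dividing by $K'(y)=(q-yq')/q^2$ from \eqref{eq:K-y-derivative}, the factors $q^2$ cancel and we obtain \eqref{eq:bachelier-eta-q}. The orientation condition \eqref{eq:bachelier-orientation} guarantees that the denominator $q-yq'$ is nonzero, so the division is legitimate.

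\textbf{Step 2: the density deformation $m$.} Write $D=q-yq'$, so that $D'=q'-q'-yq''=-yq''$. Differentiating $\eta=-q'/D$ gives
\begin{equation*}
  \eta'=\frac{-q''D+q'D'}{D^2}=\frac{-q''(q-yq')-yq'q''}{D^2}=-\frac{q\,q''}{D^2}.
\end{equation*}
Next, put $1-y\eta$ over the common denominator $D^2$: $1-y\eta=(D+yq')/D=q/D=qD/D^2$. Then
\begin{equation*}
  \mathcal M_y(\eta)=1+\eta'-y\eta=\frac{qD-qq''}{D^2}=\frac{q\bigl(q-yq'-q''\bigr)}{D^2},
\end{equation*}
which is \eqref{eq:bachelier-M-L}. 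This bookkeeping is the only place where an error is likely, and $D'=-yq''$ is what makes it collapse.

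\textbf{Step 3: the call-price convexity.} Proposition~\ref{prop:bachelier-cdf} gives $h(y)=1+C'(K(y))$, so differentiating in $y$ yields $C''(K)\,K'(y)=h'(y)$. By Theorem~\ref{thm:gaussian-skew}, $h'(y)=\phi(y)m(y)$. Therefore
\begin{equation*}
  C''(K)=\frac{\phi(y)\,m(y)}{K'(y)}=\frac{\phi(y)\,m(y)\,q^2}{D}.
\end{equation*}
Substituting $m$ from Step~2 multiplies the fraction by $q/D^2$, which gives the $q^3/D^3$ prefactor and the sign in \eqref{eq:bachelier-density-q}.

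\textbf{Regularity.} The differentiability assumed in Propositions~\ref{prop:bachelier-cdf} and~\ref{prop:bachelier-M} must be taken to include $q\in C^2$. The expected obstacle is purely algebraic, namely the simplification of $\eta'$ in Step~2.
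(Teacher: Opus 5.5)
Your proposal is correct and follows the paper's proof step for step: $\eta=(ds/dy)/(dK/dy)$ by the chain rule, then substitution into $\mathcal M_y$, then $C''(K)=h'(y)\,dy/dK=\phi(y)m(y)q^2/(q-yq')$. The only difference is that you write out the algebra the paper leaves as ``substitution,'' and your intermediate identities $D'=-yq''$, $\eta'=-qq''/D^2$ and $1-y\eta=q/D$ are all correct.
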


\begin{proof}
Since $s(y)=1/q(y)$,
\begin{equation*}
  \frac{ds}{dy}=-\frac{q'(y)}{q(y)^2}.
\end{equation*}
Using \eqref{eq:K-y-derivative},
\begin{equation*}
  s'(K(y))
  =\frac{ds/dy}{dK/dy}
  =-\frac{q'(y)}{q(y)-yq'(y)},
\end{equation*}
which proves \eqref{eq:bachelier-eta-q}. Substitution into \eqref{eq:M-operator} gives \eqref{eq:bachelier-M-L}. Finally,
\begin{equation*}
  C''(K)
  =\frac{dh}{dK}
  =h'(y)\frac{dy}{dK}
  =\phi(y)m(y)\frac{q(y)^2}{q(y)-yq'(y)},
\end{equation*}
and \eqref{eq:bachelier-density-q} follows.
\end{proof}

\begin{theorem}[Bachelier linear butterfly criterion]
\label{thm:bachelier-linear}
Assume $q(y)>0$ and $q(y)-yq'(y)>0$. Then
\begin{equation}
\label{eq:bachelier-linear-condition}
  C''(K)\ge0
  \quad\Longleftrightarrow\quad
  q''(y)+yq'(y)-q(y)\le0.
\end{equation}
\end{theorem}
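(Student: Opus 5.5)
The criterion is a sign reading of the formula \eqref{eq:bachelier-density-q} from Proposition~\ref{prop:bachelier-M}, which writes
\begin{equation*}
  C''(K)
  =-\frac{\phi(y)q(y)^3}{\bigl(q(y)-yq'(y)\bigr)^3}
  \bigl(q''(y)+yq'(y)-q(y)\bigr),
\end{equation*}
with $y=y(K)$. First, we need that $y\mapsto K(y)$ is a bijection onto the strike range, so the statement can be read pointwise in either variable. This holds because \eqref{eq:K-y-derivative} together with the hypothesis $q-yq'>0$ gives $K'(y)>0$.

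Next, we check the sign of the prefactor. We have $\phi(y)>0$. The hypothesis $q(y)>0$ gives $q(y)^3>0$, and the orientation hypothesis gives $(q(y)-yq'(y))^3>0$. Hence $\phi q^3/(q-yq')^3$ is strictly positive. It follows that $C''(K)$ has the sign opposite to $q''(y)+yq'(y)-q(y)$, so $C''(K)\ge0$ holds if and only if $q''+yq'-q\le0$. Equivalently, we could route the argument through \eqref{eq:bachelier-M-L} and Theorem~\ref{thm:gaussian-skew}: since $C''(K)=h'(y)\,dy/dK$ with $dy/dK>0$, we get $C''\ge0 \iff m\ge0$, and $m$ equals a negative factor $-q/(q-yq')^2$ times $q''+yq'-q$.

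The only point needing care is regularity. Proposition~\ref{prop:bachelier-M} implicitly requires $q$ to be twice differentiable, which is what makes $C''$ exist in the form stated. I would state that this assumption is inherited from Proposition~\ref{prop:bachelier-M}, and note that the equivalence is pointwise at corresponding $K$ and $y$. No step here is a real obstacle; the hardest part is making sure the strict positivity of the denominator is used, so that no sign flip or division by zero occurs.
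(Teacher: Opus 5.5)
Your proposal is correct and follows the paper's proof: both observe that the prefactor $\phi(y)q(y)^3/\bigl(q(y)-yq'(y)\bigr)^3$ in \eqref{eq:bachelier-density-q} is strictly positive under the hypotheses, so $C''(K)$ and $q''+yq'-q$ have opposite signs. Your alternative route through $m$ and $dy/dK>0$ is just the derivation of that same formula unpacked, and your remarks on regularity and on the pointwise correspondence between $K$ and $y$ are harmless additions.
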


\begin{proof}
In \eqref{eq:bachelier-density-q}, all multiplicative factors outside the final parenthesis have fixed positive sign after the leading minus sign. The two inequalities are therefore equivalent.
\end{proof}

Define the linear operator
\begin{equation}
\label{eq:L-operator}
  \mathcal L q=q''+yq'-q.
\end{equation}
Then $m\ge0$ is equivalent to $\mathcal Lq\le0$. The linear Bachelier condition is therefore not a separate phenomenon: it is exactly the general Gaussian density-deformation condition expressed through reciprocal implied scale.

\subsection{Equality solutions}

\begin{proposition}[Fundamental solutions of the Bachelier equality equation]
\label{prop:bachelier-fundamental}
The functions $c$ and $p$ in \eqref{eq:bachelier-kernels} satisfy
\begin{equation*}
  \mathcal Lc=0,
  \qquad
  \mathcal Lp=0,
\end{equation*}
and their Wronskian is
\begin{equation}
\label{eq:wronskian}
  W(c,p)=c(y)p'(y)-c'(y)p(y)=\phi(y)>0.
\end{equation}
Hence every solution of $\mathcal Lq=0$ on an interval is
\begin{equation}
\label{eq:q-equality}
  q(y)=A c(y)+B p(y)
\end{equation}
for constants $A,B$.
\end{proposition}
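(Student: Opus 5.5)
The plan is a direct computation of derivatives, followed by Abel's identity and the standard two-dimensional solution-space argument for second-order linear ODEs.

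\emph{Derivatives of the kernels.} Using $\phi'(y)=-y\phi(y)$, I compute
\begin{equation*}
  c'(y)=-y\phi(y)-\Phi(-y)+y\phi(y)=-\Phi(-y),
  \qquad
  c''(y)=\phi(y).
\end{equation*}
The first of these was already used in the proof of Proposition~\ref{prop:bachelier-cdf}. Similarly,
\begin{equation*}
  p'(y)=\Phi(y),
  \qquad
  p''(y)=\phi(y).
\end{equation*}

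\emph{The equation $\mathcal L q=0$.} Substituting into \eqref{eq:L-operator},
\begin{equation*}
  \mathcal Lc=\phi(y)-y\Phi(-y)-\bigl(\phi(y)-y\Phi(-y)\bigr)=0,
  \qquad
  \mathcal Lp=\phi(y)+y\Phi(y)-\bigl(\phi(y)+y\Phi(y)\bigr)=0.
\end{equation*}
As a consistency check, $\mathcal L y=0$ as well, and $p-c=y$, which agrees with put--call parity $P-C=K-F$ at unit scale.

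\emph{The Wronskian.} I expand
\begin{equation*}
  W=c\,p'-c'p=\bigl(\phi-y\Phi(-y)\bigr)\Phi(y)+\Phi(-y)\bigl(\phi+y\Phi(y)\bigr).
\end{equation*}
The two $y\Phi(y)\Phi(-y)$ terms cancel, leaving $\phi(y)\bigl(\Phi(y)+\Phi(-y)\bigr)=\phi(y)>0$. Abel's formula gives an independent check: for $q''+yq'-q=0$ it yields $W'=-yW$, so $W$ is proportional to $\phi$, and evaluating at $y=0$ gives $W(0)=\phi(0)\cdot\tfrac12+\tfrac12\phi(0)=\phi(0)$.

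\emph{General solution.} $\mathcal L$ is a linear second-order operator with continuous coefficients and leading coefficient $1$, so on any interval its solution space is two-dimensional. This follows from existence and uniqueness for the initial value problem at a point $y_0$. Since $W\neq0$, $c$ and $p$ are linearly independent and therefore span this space.

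Concretely, given a solution $q$, I pick $y_0$ and solve the $2\times2$ system $Ac(y_0)+Bp(y_0)=q(y_0)$, $Ac'(y_0)+Bp'(y_0)=q'(y_0)$. Its determinant is $W(y_0)\neq0$, so $A,B$ exist. By uniqueness, $q=Ac+Bp$ on the whole interval.

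None of these steps is a real obstacle. The only point needing care is the bookkeeping of signs, in particular $c'=-\Phi(-y)$ versus $p'=\Phi(y)$ and the cancellation inside the Wronskian.
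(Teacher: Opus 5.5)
Your proposal is correct and follows the paper's proof essentially step for step. You use the same derivatives $c'=-\Phi(-y)$, $p'=\Phi(y)$, $c''=p''=\phi$, the same cancellation in the Wronskian, and the same fundamental-pair conclusion; your Abel-formula check and your explicit initial-value argument at $y_0$ are sound additions that make precise the step the paper compresses into ``thus $c$ and $p$ form a fundamental pair.''
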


\begin{proof}
Direct differentiation gives
\begin{equation*}
  c'(y)=-\Phi(-y),
  \qquad
  c''(y)=\phi(y),
\end{equation*}
so
\begin{equation*}
  c''(y)+yc'(y)-c(y)=0.
\end{equation*}
Similarly,
\begin{equation*}
  p'(y)=\Phi(y),
  \qquad
  p''(y)=\phi(y),
\end{equation*}
so $\mathcal Lp=0$. The Wronskian equals
\begin{equation*}
  c(y)\Phi(y)+\Phi(-y)p(y)
  =\phi(y)\bigl(\Phi(y)+\Phi(-y)\bigr)
  =\phi(y).
\end{equation*}
Thus $c$ and $p$ form a fundamental pair.
\end{proof}

Because $\mathcal Lq=0$ is equivalent to $m=0$, these equality profiles are precisely the reciprocal-scale representations of zero-density intervals. This ties the ODE equality theory back to the central object $\eta$: along such intervals the CDF is locally constant, and $\eta$ follows one of the zero-density deformation curves of Proposition~\ref{prop:mills-zero-density}.

\begin{example}[Constant normal volatility]
If $s(y)\equiv s_0$, then $q(y)\equiv1/s_0$, so
\begin{equation*}
  \mathcal Lq=-q<0.
\end{equation*}
Also $\eta=0$ and $m=1$. Constant Bachelier volatility therefore corresponds exactly to the Gaussian benchmark and lies strictly inside the positive-density region.
\end{example}

\section{Black coordinates: the same CDF deformation in two Gaussian coordinates}
\label{sec:black}

The Black convention gives the same interpretation of $\eta$ in a different geometry. The slope of total Black implied volatility is again the Gaussian-normalized difference between an option-implied CDF and the standard normal CDF. The two Fukasawa coordinates provide two measure-dependent descriptions of that same deformation.

Let
\begin{equation*}
  k=\log(K/F)
\end{equation*}
be log-forward moneyness, and let $v(k)>0$ denote total Black implied volatility. Define
\begin{equation}
\label{eq:ab-def}
  a(k)=\frac{k}{v(k)}-\frac{v(k)}{2},
  \qquad
  b(k)=\frac{k}{v(k)}+\frac{v(k)}{2}.
\end{equation}
These are $-d_1$ and $-d_2$. Assume that $a$ and $b$ are differentiable and increasing. Their monotonicity under no-arbitrage is discussed by Fukasawa~\cite{Fukasawa2012}, Lucic~\cite{Lucic2021}, and Sun~\cite{Sun2026}.

Define
\begin{equation}
\label{eq:black-eta}
  \eta(k)=v'(k)=\frac{dv(k)}{dk}.
\end{equation}
Differentiating \eqref{eq:ab-def} gives
\begin{equation}
\label{eq:ab-derivatives}
  a'(k)=\frac{1-b(k)\eta(k)}{v(k)},
  \qquad
  b'(k)=\frac{1-a(k)\eta(k)}{v(k)}.
\end{equation}

\subsection{Risk-neutral and share-measure CDF deformations}

Normalize the forward to one. The Black call price is
\begin{equation}
\label{eq:black-call}
  c(k)=\Phi(-a(k))-e^k\Phi(-b(k)).
\end{equation}
Let $X=S_T/F$, so $\E[X]=1$. Denote by $h_Q(k)=\Q(\log X\le k)$ the pricing-measure CDF and by $h_S(k)$ the corresponding CDF under the share measure with density $X$ relative to $\Q$.

\begin{proposition}[Black CDF deformations]
\label{prop:black-cdfs}
The pricing-measure CDF satisfies
\begin{equation}
\label{eq:black-cdf-f2}
  h_Q(k)=\Phi(b(k))+\phi(b(k))\eta(k),
\end{equation}
and the share-measure CDF satisfies
\begin{equation}
\label{eq:black-cdf-f1}
  h_S(k)=\Phi(a(k))+\phi(a(k))\eta(k).
\end{equation}
Equivalently,
\begin{equation}
\label{eq:black-eta-cdf}
  \eta(k)
  =\frac{h_Q(k)-\Phi(b(k))}{\phi(b(k))}
  =\frac{h_S(k)-\Phi(a(k))}{\phi(a(k))}.
\end{equation}
\end{proposition}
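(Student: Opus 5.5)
We will follow the Bachelier template of Proposition~\ref{prop:bachelier-cdf}: differentiate the Black call price \eqref{eq:black-call} in $k$, and read off the two CDFs from the derivative. Two standard facts are needed. First, the Black identity $e^{k}\phi(b)=\phi(a)$, which follows from $b^2-a^2=(b-a)(b+a)=v\cdot 2k/v=2k$. Second, the identifications of the CDFs in terms of $c(k)$. Writing $C(K)=\E[(X-K)^+]$ with $K=e^k$, differentiability gives $C'(K)=-\Q(X>K)$. Since $dc/dk=K\,C'(K)$, this yields
\begin{equation*}
  c'(k)=-e^k\bigl(1-h_Q(k)\bigr).
\end{equation*}
For the share measure, we use $c(k)=\E[X\mathbf 1_{X>K}]-K\Q(X>K)$. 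Because $\E[X]=1$, we have $\E[X\mathbf 1_{X>K}]=1-h_S(k)$, and hence
\begin{equation*}
  c(k)-c'(k)=1-h_S(k).
\end{equation*}

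Next we compute $c'(k)$ from \eqref{eq:black-call} using \eqref{eq:ab-derivatives}:
\begin{equation*}
  c'(k)=-\phi(a)a'-e^k\Phi(-b)+e^k\phi(b)b'.
\end{equation*}
We then replace $e^k\phi(b)$ by $\phi(a)$. The $\phi(a)$ terms collect as $\phi(a)(b'-a')$, and by \eqref{eq:ab-derivatives}
\begin{equation*}
  b'-a'=\frac{(b-a)\eta}{v}=\eta .
\end{equation*}
This gives
\begin{equation*}
  c'(k)=\phi(a)\eta-e^k\Phi(-b)=e^k\bigl(\phi(b)\eta-\Phi(-b)\bigr).
\end{equation*}
Matching with $c'(k)=-e^k(1-h_Q)$ gives $h_Q=1-\Phi(-b)+\phi(b)\eta=\Phi(b)+\phi(b)\eta$, which is \eqref{eq:black-cdf-f2}.

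For \eqref{eq:black-cdf-f1}, subtract the two expressions for $c$ and $c'$:
\begin{equation*}
  c-c'=\Phi(-a)-e^k\Phi(-b)-\phi(a)\eta+e^k\Phi(-b)=\Phi(-a)-\phi(a)\eta .
\end{equation*}
Hence $h_S=1-\Phi(-a)+\phi(a)\eta=\Phi(a)+\phi(a)\eta$. The equivalent forms \eqref{eq:black-eta-cdf} follow by solving for $\eta$, since $\phi>0$.

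The main point of care is the share-measure identification $1-h_S=c-c'$. Here the convention matters: $h_S$ is defined by $\Q^S(\log X\le k)$, with $d\Q^S/d\Q=X$. Also, the differentiability of $c$ is what lets us write $c'(k)=-K\,\Q(X>K)$; atoms would affect only one-sided derivatives, and these are excluded by the differentiability assumption on $v$, and hence on $c$. The rest is the algebraic cancellation driven by $e^k\phi(b)=\phi(a)$, which we verify once and reuse.
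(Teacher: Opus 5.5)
Your proof is correct and follows the paper's route: you differentiate the Black call, use $e^k\phi(b)=\phi(a)$ to get $c'(k)=e^k\bigl(\phi(b)\eta-\Phi(-b)\bigr)$, and read off $h_Q=1+e^{-k}c'$. The only cosmetic difference is that you get $h_S$ from $1-h_S=c-c'$ rather than from the paper's put-price identity $p=e^kh_Q-h_S$, and the two relations are equivalent by put--call parity.
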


\begin{proof}
Differentiating \eqref{eq:black-call} with respect to $k$ and using
\begin{equation*}
  \phi(a(k))=e^k\phi(b(k))
\end{equation*}
gives
\begin{equation*}
  c'(k)=e^k\bigl(\phi(b(k))\eta(k)-\Phi(-b(k))\bigr).
\end{equation*}
Since $K=Fe^k$, the strike derivative of the normalized call is $e^{-k}c'(k)$. Therefore
\begin{equation*}
  h_Q(k)
  =1+e^{-k}c'(k)
  =\Phi(b(k))+\phi(b(k))\eta(k).
\end{equation*}

The normalized Black put price is
\begin{equation*}
  p(k)=e^k\Phi(b(k))-\Phi(a(k)).
\end{equation*}
Also,
\begin{equation*}
  p(k)=\E\bigl((e^k-X)^+\bigr)
  =e^kh_Q(k)-h_S(k).
\end{equation*}
Substituting the first CDF identity and using $e^k\phi(b(k))=\phi(a(k))$ gives
\begin{equation*}
  h_S(k)=\Phi(a(k))+\phi(a(k))\eta(k).
\end{equation*}
\end{proof}

Thus the same implied-volatility slope has two distributional meanings. In the $b$ coordinate it measures the pricing-measure CDF deformation, while in the $a$ coordinate it measures the share-measure CDF deformation. The two normalized coordinates are therefore not merely algebraic reparameterizations of Black implied volatility; they are Gaussian lenses for two economically natural measures.

When $\eta$ is viewed as a function of $b$ through the inverse map $k=k(b)$, Theorem~\ref{thm:gaussian-skew} gives
\begin{equation*}
  m_b(b)=\mathcal M_b(\eta)
  =1+\frac{d\eta}{db}-b\eta.
\end{equation*}
Similarly, in the $a$ coordinate,
\begin{equation*}
  m_a(a)=\mathcal M_a(\eta)
  =1+\frac{d\eta}{da}-a\eta.
\end{equation*}
The Mills-ratio bounds become
\begin{equation}
\label{eq:black-mills-b}
  -R(-b)\le\eta\le R(b),
\end{equation}
and
\begin{equation}
\label{eq:black-mills-a}
  -R(-a)\le\eta\le R(a).
\end{equation}

\subsection{The common butterfly factor}

\begin{theorem}[Black density deformation identity]
\label{thm:black-density}
Under the increasing-coordinate assumptions,
\begin{equation}
\label{eq:black-M-density}
  b'(k)m_b(b(k))
  =a'(k)m_a(a(k))
  =v''(k)+v(k)a'(k)b'(k).
\end{equation}
Consequently, risk-neutral density positivity is equivalent to any of
\begin{equation}
\label{eq:black-equivalent-density}
  m_b\ge0,
  \qquad
  m_a\ge0,
  \qquad
  v''(k)+v(k)a'(k)b'(k)\ge0.
\end{equation}
\end{theorem}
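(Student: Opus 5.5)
The plan is a chain-rule computation built on Theorem~\ref{thm:gaussian-skew} and Proposition~\ref{prop:black-cdfs}. First, regard $h_Q$ as a function of $k$ and write it as $H_Q(b(k))$, where $H_Q(b)=\Phi(b)+\phi(b)\eta$ with $\eta$ viewed as a function of $b$ through the inverse map $k=k(b)$. Theorem~\ref{thm:gaussian-skew} then gives $dH_Q/db=\phi(b)m_b(b)$, so the chain rule yields
\begin{equation*}
  \frac{d h_Q}{dk}=\phi(b(k))\,b'(k)\,m_b(b(k)).
\end{equation*}
The same argument in the $a$ coordinate, using \eqref{eq:black-cdf-f1}, gives $dh_S/dk=\phi(a(k))\,a'(k)\,m_a(a(k))$.

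Second, compute $dh_Q/dk$ directly in $k$ from \eqref{eq:black-cdf-f2}, using $\phi'(b)=-b\phi(b)$:
\begin{equation*}
  \frac{dh_Q}{dk}=\phi(b)\bigl(b'+v''-b\,b'\,\eta\bigr)=\phi(b)\bigl(v''+b'(1-b\eta)\bigr).
\end{equation*}
Substituting $1-b\eta=v\,a'$ from \eqref{eq:ab-derivatives} gives $\phi(b)\bigl(v''+v\,a'b'\bigr)$. Dividing by $\phi(b)>0$ and comparing with the first step proves $b'm_b=v''+v\,a'b'$.

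The symmetric computation from \eqref{eq:black-cdf-f1} gives $\phi(a)\bigl(v''+a'(1-a\eta)\bigr)$. Since $1-a\eta=v\,b'$, this equals $\phi(a)\bigl(v''+v\,a'b'\bigr)$, and so $a'm_a$ equals the same expression. This symmetric pairing, where the $b$-derivative brings in $1-b\eta$ and hence $a'$, and vice versa, is the one point that needs care. It is exactly what makes both sides collapse to the common butterfly factor. Everything else is routine.

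For the equivalences, note that $\phi>0$ and that $a',b'>0$ by the increasing-coordinate assumption. Hence $m_b\ge0$, $m_a\ge0$ and $v''+v\,a'b'\ge0$ are mutually equivalent. Each is also equivalent to $dh_Q/dk\ge0$, which is nonnegativity of the risk-neutral density in $k$. Since $K=Fe^k$ is increasing, this is the same as nonnegativity of the density in strike.
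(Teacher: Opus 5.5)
Your proposal is correct and is essentially the paper's argument. The paper substitutes $d\eta/db=v''(k)/b'(k)$ directly into $b'(k)\mathcal M_b(\eta)$ and uses $1-b\eta=v\,a'$ (and symmetrically $1-a\eta=v\,b'$); your two evaluations of $dh_Q/dk$ are that same identity multiplied through by $\phi(b)$, with the small bonus of making explicit that $\phi(b)\bigl(v''+v\,a'b'\bigr)$ is the risk-neutral density in $k$, which the paper leaves implicit in the ``consequently'' clause.
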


\begin{proof}
Since $\eta(k)=v'(k)$,
\begin{equation*}
  \frac{d\eta}{db}=\frac{v''(k)}{b'(k)}.
\end{equation*}
Therefore
\begin{align*}
  b'(k)m_b
  &=b'(k)+v''(k)-b(k)\eta(k)b'(k)\\
  &=v''(k)+b'(k)\bigl(1-b(k)\eta(k)\bigr)\\
  &=v''(k)+v(k)a'(k)b'(k),
\end{align*}
where the last equality uses \eqref{eq:ab-derivatives}. The calculation in the $a$ coordinate is identical.
\end{proof}

The familiar Black butterfly factor is therefore not an unrelated nonlinear expression. It is the Gaussian-relative density deformation generated by $\eta$, translated back from either normalized coordinate to log-forward moneyness. The first-order monotonicity of $a$ and $b$ fixes orientation; $\eta$ describes the CDF displacement; and $m_a,m_b$ describe the corresponding density displacement.

\begin{example}[Constant Black volatility]
If $v(k)\equiv v_0$, then $\eta=0$ and
\begin{equation*}
  a'(k)=b'(k)=\frac{1}{v_0}.
\end{equation*}
Hence $m_a=m_b=1$ and
\begin{equation*}
  v''(k)+v(k)a'(k)b'(k)=\frac{1}{v_0}>0.
\end{equation*}
Constant Black volatility therefore corresponds exactly to the Gaussian benchmark in both normalized coordinates.
\end{example}

\section{One-coordinate reconstruction in the Black convention}
\label{sec:reconstruction}

The deformation $\eta$ can be treated as the primitive object rather than as a quantity computed only after a smile has been specified. This viewpoint separates two tasks: the Mills and density conditions screen a candidate $\eta$ for no-arbitrage, while a first-order equation reconstructs the implied-volatility smile.

Take $b$ as the independent variable and write
\begin{equation}
\label{eq:a-of-b}
  a=A(b).
\end{equation}
Then
\begin{equation}
\label{eq:vk-from-ab}
  v(b)=b-A(b),
  \qquad
  k(b)=\frac{b^2-A(b)^2}{2}.
\end{equation}

\begin{proposition}[One-coordinate deformation formula]
\label{prop:one-coordinate}
If $b-A(b)A'(b)>0$, then
\begin{equation}
\label{eq:eta-A}
  \eta(b)
  =\frac{dv/d b}{dk/d b}
  =\frac{1-A'(b)}{b-A(b)A'(b)}.
\end{equation}
Conversely, if $v=v(b)>0$ and $a=b-v$, then $v$ satisfies
\begin{equation}
\label{eq:black-reconstruction-ode}
  v'(b)
  =\frac{\eta(b)v(b)}
  {1-(b-v(b))\eta(b)}.
\end{equation}
\end{proposition}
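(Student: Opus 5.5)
The plan is a chain-rule computation on the parametrization \eqref{eq:vk-from-ab}, followed by an algebraic rearrangement for the converse. Throughout we regard $b$ as the independent variable and $v$, $k$, $a=A(b)$ as functions of $b$. The definition \eqref{eq:black-eta} reads $\eta=dv/dk$, so whenever $dk/db\neq0$ we may write $\eta=(dv/db)/(dk/db)$. This gives the first equality in \eqref{eq:eta-A}.

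For the first part, we first check \eqref{eq:vk-from-ab} from \eqref{eq:ab-def}:
\begin{equation*}
  b-a=v,
  \qquad
  b^2-a^2=(b-a)(b+a)=v\cdot\frac{2k}{v}=2k.
\end{equation*}
Differentiating in $b$ gives
\begin{equation*}
  \frac{dv}{db}=1-A'(b),
  \qquad
  \frac{dk}{db}=b-A(b)A'(b).
\end{equation*}
The hypothesis $b-A(b)A'(b)>0$ makes $k$ a strictly increasing function of $b$. Hence $b\mapsto k(b)$ is locally invertible, which is consistent with the standing assumption that $b(k)$ is increasing. The quotient rule for the derivative of the composite then gives $\eta=(1-A')/(b-AA')$, which is \eqref{eq:eta-A}.

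For the converse, we take $v(b)>0$, set $a=b-v$, and define $k$ by $k=(b^2-a^2)/2=v(2b-v)/2$. This definition is consistent with $b=k/v+v/2$. The relation $\eta\,dk/db=dv/db$ still holds wherever $dk/db\neq0$. We compute
\begin{equation*}
  \frac{dk}{db}=v'(b)\,(b-v)+v=a\,v'+v.
\end{equation*}
Then $v'=\eta\,(a v'+v)$, which rearranges to $v'\,(1-a\eta)=\eta v$. Substituting $a=b-v$ yields \eqref{eq:black-reconstruction-ode}, provided $1-(b-v)\eta\neq0$.

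The main obstacle is justifying that division. I would use \eqref{eq:ab-derivatives}: $1-a\eta=v\,b'(k)$, which is positive under the increasing-coordinate assumption. So the denominator is nonzero, and equivalently $dk/db=v/(1-a\eta)=1/b'(k)>0$. Apart from this sign bookkeeping, the rest is routine differentiation.
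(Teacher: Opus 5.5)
Your proposal is correct and follows the paper's proof. Like the paper, you differentiate $v=b-A(b)$ and $k=(b^2-A(b)^2)/2$ to get \eqref{eq:eta-A}, then differentiate $k=bv-v^2/2$ and solve $v'=\eta\,(v+(b-v)v')$ for $v'$. Your extra step, showing via \eqref{eq:ab-derivatives} that the denominator $1-(b-v)\eta=v\,b'(k)$ is positive, justifies a division that the paper performs without comment.
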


\begin{proof}
Differentiating \eqref{eq:vk-from-ab} gives
\begin{equation*}
  v'(b)=1-A'(b),
  \qquad
  k'(b)=b-A(b)A'(b),
\end{equation*}
which proves \eqref{eq:eta-A}. Alternatively,
\begin{equation*}
  k(b)=b v(b)-\frac{v(b)^2}{2},
\end{equation*}
so
\begin{equation*}
  k'(b)=v(b)+(b-v(b))v'(b).
\end{equation*}
Using $\eta=(dv/db)/(dk/db)$ and solving for $v'(b)$ gives \eqref{eq:black-reconstruction-ode}.
\end{proof}

A candidate deformation $\eta(b)$ can therefore be screened by the CDF barriers
\begin{equation*}
  -R(-b)\le\eta(b)\le R(b),
\end{equation*}
the coordinate orientation
\begin{equation*}
  1-(b-v(b))\eta(b)>0,
\end{equation*}
and the density condition
\begin{equation*}
  \mathcal M_b(\eta)\ge0.
\end{equation*}
The reconstruction equation then recovers $v$. This makes $\eta$ a natural candidate parameterization variable: it directly controls CDF displacement, its differential transform controls density positivity, and only afterward is the implied-volatility level reconstructed.

\section{Surface extension and local volatility}
\label{sec:surface}

The deformation viewpoint also clarifies the surface problem. Static no-arbitrage is encoded in the spatial density deformation $m$, while the maturity derivative describes how the implied scale evolves across expiries. Local variance is obtained by dividing the maturity evolution by the same second-order factor that enforces density positivity.

To keep the formulas transparent, this section works in forward units under zero rates and carry. Deterministic carry can be handled by the usual forward normalization.

\subsection{Bachelier local variance}

Let $s=s(y,t)$ denote total normal implied standard deviation at maturity $t$ and set
\begin{equation*}
  q(y,t)=\frac{1}{s(y,t)}.
\end{equation*}
At each maturity,
\begin{equation*}
  K=F+\frac{y}{q(y,t)}.
\end{equation*}
Define
\begin{equation*}
  \mathcal Lq=q_{yy}+yq_y-q.
\end{equation*}

\begin{proposition}[Bachelier local-variance formula]
\label{prop:bachelier-local}
Assume $q-yq_y>0$ and $\mathcal Lq<0$. Then
\begin{equation}
\label{eq:bachelier-local}
  a_{\mathrm{loc}}^2
  =2\frac{q_t}{\mathcal Lq}
  \frac{(q-yq_y)^2}{q^4}.
\end{equation}
\end{proposition}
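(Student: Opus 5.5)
The plan is to combine the Bachelier (arithmetic) Dupire equation with the density formula \eqref{eq:bachelier-density-q} from Proposition~\ref{prop:bachelier-M}. In forward units with zero rates and carry, the local normal variance is characterized at fixed strike by
\begin{equation*}
  C_t(K,t)=\tfrac12\,a_{\mathrm{loc}}^2(K,t)\,C_{KK}(K,t),
  \qquad\text{so}\qquad
  a_{\mathrm{loc}}^2=\frac{2C_t}{C_{KK}}.
\end{equation*}
The hypothesis $\mathcal Lq<0$, together with $q-yq_y>0$ and $q>0$, makes $C_{KK}>0$ by Theorem~\ref{thm:bachelier-linear}, so the division is legitimate. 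The task therefore reduces to expressing $C_t|_K$ and $C_{KK}$ in the variables $(y,t)$ through $q$.

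For the denominator I will simply reuse \eqref{eq:bachelier-density-q} at each fixed maturity:
\begin{equation*}
  C_{KK}=-\frac{\phi(y)q^3}{(q-yq_y)^3}\,\mathcal Lq.
\end{equation*}

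For the numerator I will write $C=s\,c(y)$, with $y=(K-F)/s$. Differentiating in $s$ at fixed $K$ gives the Bachelier vega:
\begin{equation*}
  \partial_s C\big|_K=c(y)+s\,c'(y)\,\partial_s y=c(y)+y\Phi(-y)=\phi(y).
\end{equation*}
Hence $C_t|_K=\phi(y)\,s_t|_K$.

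The one delicate step is converting $s_t|_K$ into the $(y,t)$-partial $q_t$, because $y$ itself moves with $t$ when $K$ is held fixed. I will differentiate the constraint $K=F+y/q(y,t)$ in $t$ at fixed $K$, which yields
\begin{equation*}
  y_t\,(q-yq_y)=y\,q_t,
  \qquad\text{so}\qquad
  y_t=\frac{y\,q_t}{q-yq_y}.
\end{equation*}
Then
\begin{equation*}
  s_t\big|_K=-\frac{q_y\,y_t+q_t}{q^2}=-\frac{q_t}{q\,(q-yq_y)},
\end{equation*}
after the cross terms $\pm y\,q_y\,q_t$ cancel. This gives
\begin{equation*}
  C_t=-\frac{\phi(y)\,q_t}{q\,(q-yq_y)}.
\end{equation*}

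Finally I will take the ratio $2C_t/C_{KK}$. The factors $\phi(y)$ and the two minus signs cancel, one power of $q-yq_y$ in the denominator of $C_t$ against the cube in $C_{KK}$ leaves $(q-yq_y)^2$, and $q\cdot q^3$ gives $q^4$. The result is \eqref{eq:bachelier-local}. I expect the main point of care to be the fixed-$K$ versus fixed-$y$ bookkeeping in the time derivative; the rest is algebra on formulas already established.
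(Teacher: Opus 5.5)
Your proof is correct and follows essentially the same route as the paper: you combine the Bachelier Dupire equation, the fixed-maturity density formula for $C_{KK}$, and the fixed-strike relation $y_t = yq_t/(q-yq_y)$ to get $C_t = -\phi(y)q_t/\bigl(q(q-yq_y)\bigr)$, and then take the ratio. Routing the time derivative through the Bachelier vega $\partial_s C|_K=\phi(y)$ is only a tidier packaging of the paper's direct differentiation of $C=q^{-1}c(y)$ at fixed strike.
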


\begin{proof}
At fixed maturity, Proposition~\ref{prop:bachelier-M} gives
\begin{equation*}
  \frac{\partial^2 C}{\partial K^2}
  =-\frac{\phi(y)q^3}{(q-yq_y)^3}\mathcal Lq.
\end{equation*}
At fixed strike, differentiating $K-F=y/q(y,t)$ gives
\begin{equation*}
  y_t=\frac{yq_t}{q-yq_y}.
\end{equation*}
A direct differentiation of $C=q^{-1}c(y)$ at fixed strike then yields
\begin{equation*}
  \frac{\partial C}{\partial t}
  =-\frac{\phi(y)q_t}{q(q-yq_y)}.
\end{equation*}
Using the Bachelier Dupire equation
\begin{equation*}
  \frac{\partial C}{\partial t}
  =\frac12 a_{\mathrm{loc}}^2\frac{\partial^2 C}{\partial K^2}
\end{equation*}
gives \eqref{eq:bachelier-local}.
\end{proof}

\begin{corollary}[Bachelier calendar monotonicity]
Under $q>0$ and $q-yq_y>0$, calendar monotonicity $\partial C/\partial t\ge0$ at fixed strike is equivalent to
\begin{equation*}
  q_t\le0.
\end{equation*}
\end{corollary}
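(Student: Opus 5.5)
The corollary follows from the fixed-strike maturity derivative already obtained in the proof of Proposition~\ref{prop:bachelier-local}, so I will read it off that formula instead of differentiating again. That proof gives, at fixed strike,
\begin{equation*}
  \frac{\partial C}{\partial t}
  =-\frac{\phi(y)\,q_t}{q\,(q-yq_y)}.
\end{equation*}
This identity only uses $C=q^{-1}c(y)$, the relation $K-F=y/q(y,t)$, and the fixed-strike derivative $y_t=yq_t/(q-yq_y)$. None of these needs the density condition $\mathcal Lq<0$. I will note this explicitly, since the corollary assumes only $q>0$ and $q-yq_y>0$.

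It is worth re-checking the formula briefly. Set $D=q-yq_y$. Then
\begin{equation*}
  \partial_t C=-\frac{q_t}{q^2}\,c(y)+\frac{1}{q}\,c'(y)\,y_t ,
\end{equation*}
where $c'(y)=-\Phi(-y)$ and $y_t=yq_t/D$. Placing both terms over the common denominator $q^2D$ gives the numerator
\begin{equation*}
  -q_t\bigl[c(y)D+q\,y\,\Phi(-y)\bigr].
\end{equation*}
Substituting $c(y)=\phi(y)-y\Phi(-y)$, the bracket becomes
\begin{equation*}
  \phi(y)D+y\,\Phi(-y)\,(q-D)=\phi(y)D+y^2q_y\,\Phi(-y).
\end{equation*}
This is not obviously $q\,\phi(y)$, so the cited formula needs a careful recheck. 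The main risk is a sign or bookkeeping slip in the fixed-strike derivative $y_t$. Differentiating $y=(K-F)q$ at fixed $K$ gives $y_t=(K-F)(q_t+q_yy_t)$, with $(K-F)=y/q$. Solving yields $y_t=yq_t/(q-yq_y)$, which confirms $y_t$.

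The bracket discrepancy, by contrast, signals a possible error in the quoted $\partial C/\partial t$. I would test it on a simple case, such as constant $q_y=0$. There the bracket reduces to $\phi(y)q$, consistent with the quoted formula. In general, however, an extra term proportional to $q_y$ appears.

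If the extra term survives the recheck, the plan changes. The target becomes showing that the bracket $\phi(y)D+y^2q_y\Phi(-y)$ is positive under the hypotheses. This is the main obstacle: $q_y$ can have either sign, so positivity is not automatic. One route is to use $D>0$ together with the Mills-ratio bound $y\Phi(-y)<\phi(y)$ for $y>0$, handling $y<0$ separately via $\Phi$.

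Once $\partial_t C=-q_t\cdot(\text{positive factor})$ is established, the equivalence is immediate. Since $\phi(y)>0$, $q>0$ and $D>0$, the sign of $\partial C/\partial t$ is the sign of $-q_t$. Hence $\partial C/\partial t\ge0$ if and only if $q_t\le0$, i.e.\ total normal implied standard deviation is nondecreasing in maturity along fixed $y$.
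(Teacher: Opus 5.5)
Your route is the paper's: the corollary is read off the fixed-strike identity $\partial C/\partial t=-\phi(y)q_t/\bigl(q(q-yq_y)\bigr)$ from the proof of Proposition~\ref{prop:bachelier-local}. You are right that this identity needs only $q>0$ and $q-yq_y>0$, not $\mathcal Lq<0$. The problem is your ``recheck.'' It contains a computational error, it leads you to doubt a correct formula, and you leave it unresolved. At fixed strike, $q$ is evaluated at the moving point $(y(K,t),t)$. Hence
\[
\partial_t\bigl(q^{-1}\bigr)\big|_K=-\frac{q_t+q_y\,y_t}{q^2},
\]
not $-q_t/q^2$. You dropped the chain-rule term $q_y y_t$, and that is exactly where your spurious $y^2q_y\Phi(-y)$ comes from. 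Your test case $q_y=0$ cannot detect the error, because the missing term vanishes there. Restore it and use your own $y_t=yq_t/D$, which gives $q_t+q_yy_t=q_t\,q/D$. Then
\[
\partial_t C\big|_K=-\frac{q_t}{qD}\bigl(c(y)+y\Phi(-y)\bigr)=-\frac{\phi(y)\,q_t}{q\,D}.
\]
Over your common denominator $q^2D$, the bracket collapses to exactly $q\phi(y)$, confirming the paper's formula. A cleaner way to see the same thing: Bachelier vega is $c(y)-yc'(y)=\phi(y)$, so $\partial_tC|_K=\phi(y)\,\partial_t s|_K$, and $\partial_t s|_K=-(q_t+q_yy_t)/q^2=-q_t/(qD)$.

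Your fallback plan would have failed. The quantity $\phi(y)D+y^2q_y\Phi(-y)$ is not positive under the stated hypotheses. Take $y=-1$, $q=1$, $q_y=-0.9$: then $D=0.1>0$, but $0.1\,\phi(-1)-0.9\,\Phi(1)\approx-0.73<0$. So the proposal as written never establishes the ``positive factor'' its last step relies on. With the chain-rule term included, that factor is $\phi(y)/(qD)>0$. The equivalence $\partial C/\partial t\ge0 \Longleftrightarrow q_t\le0$ then follows immediately, exactly as in the paper.
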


The denominator in \eqref{eq:bachelier-local} is the same object that determines $m(y)$. Thus Bachelier local variance compares maturity evolution with the Gaussian-relative density deformation already identified in the fixed-maturity theory.

\subsection{Black local variance}

Let $v=v(k,t)$ be total Black implied volatility. At each fixed maturity define $a$, $b$, and $\eta=\partial v/\partial k$ as in Section~\ref{sec:black}. In this section, derivatives such as $v_t$ and $v_{kk}$ are genuine partial derivatives of the two-variable surface.

\begin{proposition}[Black local-variance formula]
\label{prop:black-local}
Assume
\begin{equation*}
  v_{kk}+v a_kb_k>0.
\end{equation*}
Then
\begin{equation}
\label{eq:black-local}
  \sigma_{\mathrm{loc}}^2
  =\frac{2v_t}{v_{kk}+v a_kb_k}
  =\frac{2v_t}{b_k m_b}
  =\frac{2v_t}{a_k m_a}.
\end{equation}
\end{proposition}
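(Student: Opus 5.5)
The plan mirrors the proof of Proposition~\ref{prop:bachelier-local}. I would obtain $\partial^2 C/\partial K^2$ from the fixed-maturity density identity of Theorem~\ref{thm:black-density}, obtain $\partial C/\partial t$ at fixed strike directly from the Black formula, and then divide using Dupire's equation. Work in forward units with $F=1$, zero rates, and $K=e^k$. The Dupire equation in strike reads
\begin{equation*}
  \frac{\partial C}{\partial t}=\frac12\sigma_{\mathrm{loc}}^2K^2\frac{\partial^2 C}{\partial K^2},
\end{equation*}
where $\partial C/\partial t$ is taken at fixed $K$, hence at fixed $k$ because $F$ is constant.

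For the time derivative, differentiate $c=\Phi(-a)-e^k\Phi(-b)$ in $t$ at fixed $k$. Using $a_t=-(k/v^2+1/2)v_t$ and $b_t=(-k/v^2+1/2)v_t$, we get
\begin{equation*}
  c_t=-\phi(a)a_t+e^k\phi(b)b_t .
\end{equation*}
With the identity $e^k\phi(b)=\phi(a)$ from the proof of Proposition~\ref{prop:black-cdfs}, this gives
\begin{equation*}
  c_t=\phi(a)(b_t-a_t)=\phi(a)v_t ,
\end{equation*}
since $b-a=v$. This is the standard vega identity.

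For the spatial derivative, the first identity in the proof of Proposition~\ref{prop:black-cdfs} gives $h_Q(k)=1+e^{-k}c_k$, so $\partial^2 C/\partial K^2=e^{-k}\,\partial_k h_Q$. By Theorem~\ref{thm:gaussian-skew}, applied in the $b$ coordinate via the chain rule,
\begin{equation*}
  \partial_k h_Q=\phi(b)\,m_b(b(k))\,b_k .
\end{equation*}
Theorem~\ref{thm:black-density} then yields
\begin{equation*}
  K^2\frac{\partial^2C}{\partial K^2}=e^{k}\phi(b)\,b_km_b=\phi(a)\bigl(v_{kk}+va_kb_k\bigr).
\end{equation*}

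Substituting into Dupire gives
\begin{equation*}
  \phi(a)v_t=\tfrac12\sigma_{\mathrm{loc}}^2\phi(a)\bigl(v_{kk}+va_kb_k\bigr).
\end{equation*}
Since $\phi(a)>0$ and the denominator is positive by assumption, we can divide, which gives the first formula. The other two expressions follow immediately from \eqref{eq:black-M-density}.

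The main obstacle is bookkeeping rather than ideas. One must make sure the time derivative is taken at fixed strike, so $k$ is held fixed, and that the factor $K^2=e^{2k}$ together with the $e^{-k}$ factors cancels exactly against $e^k\phi(b)=\phi(a)$. One should also confirm that $v$ here is total volatility, so that $\sigma_{\mathrm{loc}}^2$ is understood relative to $t$ through $v_t$ in the same convention as the Bachelier case. I would check the normalization on the constant-volatility example $v=\sigma\sqrt t$. There $v_t=\sigma^2/(2v)$ and $v_{kk}+va_kb_k=1/v$, so $\sigma_{\mathrm{loc}}^2=\sigma^2$.
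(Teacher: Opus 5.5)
Your proof is correct and follows the paper's argument: the vega identity $c_t=\phi(a)v_t$ at fixed $k$, the identification $K^2\,\partial^2C/\partial K^2=c_{kk}-c_k=\phi(a)\bigl(v_{kk}+va_kb_k\bigr)$, and division in Dupire's equation, with the last two expressions following from \eqref{eq:black-M-density}. The only difference is that you obtain the butterfly factor by differentiating $h_Q$ and invoking Theorem~\ref{thm:black-density}, whereas the paper differentiates the call in log-strike directly; the two computations are equivalent, and your constant-volatility check confirms the normalization.
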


\begin{proof}
At fixed $k$, Black vega with respect to total standard deviation is
\begin{equation*}
  c_v=\phi(a),
\end{equation*}
so
\begin{equation*}
  c_t=\phi(a)v_t.
\end{equation*}
The standard log-strike differentiation gives
\begin{equation*}
  c_{kk}-c_k
  =\phi(a)\bigl(v_{kk}+v a_kb_k\bigr).
\end{equation*}
Using
\begin{equation*}
  c_t
  =\frac12\sigma_{\mathrm{loc}}^2(c_{kk}-c_k)
\end{equation*}
gives the first equality in \eqref{eq:black-local}. The remaining equalities follow from the fixed-maturity identity \eqref{eq:black-M-density} applied slice by slice.
\end{proof}

\begin{corollary}[Black calendar monotonicity]
In the zero-carry forward normalization, calendar monotonicity at fixed $k$ is equivalent to
\begin{equation*}
  v_t\ge0.
\end{equation*}
\end{corollary}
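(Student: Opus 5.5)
The plan is to compute the fixed-$k$ maturity derivative of the normalized Black call directly. Combining it with the vega identity gives the equivalence. In forward units with zero carry, $c(k,t)=\Phi(-a)-e^k\Phi(-b)$, and $a,b$ depend on $t$ only through $v=v(k,t)$. So at fixed $k$ the chain rule gives $c_t=c_v\,v_t$, where $c_v$ is the partial derivative of the Black price in total standard deviation at fixed $k$.

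First I would verify $c_v=\phi(a)$, as already used in the proof of Proposition~\ref{prop:black-local}. At fixed $k$ we have $\partial a/\partial v=-k/v^2-1/2$ and $\partial b/\partial v=-k/v^2+1/2$. Hence
\begin{equation*}
  c_v=-\phi(a)\Bigl(-\tfrac{k}{v^2}-\tfrac12\Bigr)+e^k\phi(b)\Bigl(-\tfrac{k}{v^2}+\tfrac12\Bigr).
\end{equation*}
The identity $\phi(a)=e^k\phi(b)$ from the proof of Proposition~\ref{prop:black-cdfs} collapses this to $\phi(a)$, which is strictly positive.

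Therefore $c_t=\phi(a)v_t$ has the same sign as $v_t$ pointwise, so $c_t\ge0$ at fixed $k$ if and only if $v_t\ge0$. It remains to note that, with $F$ fixed (zero carry), fixed $k$ means fixed strike $K=Fe^k$. Also, the undiscounted call price equals $F$ times the normalized price. So calendar monotonicity of $C(K,t)$ at fixed strike is the same statement as $c_t\ge0$ at fixed $k$.

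The main obstacle is bookkeeping rather than analysis. One must make sure the fixed-$k$ derivative really coincides with the fixed-strike derivative, which uses the zero-carry normalization, and that the identity $\phi(a)=e^k\phi(b)$ is applied with the correct $a=-d_1$, $b=-d_2$ convention. That identity follows from $b^2-a^2=(b-a)(b+a)=v\cdot 2k/v=2k$.
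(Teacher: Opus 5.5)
Your proposal is correct and follows the paper's route: the corollary comes from the identity $c_t=\phi(a)v_t$ at fixed $k$ in the proof of Proposition~\ref{prop:black-local}, together with $\phi(a)>0$. You also verify the vega identity $c_v=\phi(a)$ explicitly via $\phi(a)=e^k\phi(b)$, and you explain why fixed $k$ coincides with fixed strike under zero carry; the paper leaves both points implicit.
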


Again, the same quantity has both a static and a dynamic role. The deformation $m$ is the Gaussian-relative density in a normalized coordinate, and the local-volatility denominator is that same density deformation translated back to log-forward moneyness.

\section{A Black--Bachelier deformation dictionary}
\label{sec:dictionary}

The common structure can be summarized as follows.

\begin{center}
\begin{tabular}{@{}lll@{}}
\toprule
 & Bachelier & Black \\
\midrule
Normalized coordinate
& $y=(K-F)/s$
& $a=k/v-v/2$, $b=k/v+v/2$ \\
CDF deformation
& $\eta=s'(K(y))$
& $\eta=v'(k)$ \\
CDF representation
& $h=\Phi(y)+\phi(y)\eta$
& $h_Q=\Phi(b)+\phi(b)\eta$ \\
Density deformation
& $m=\mathcal M_y(\eta)$
& $m_b=\mathcal M_b(\eta)$ \\
Second-order factor
& $-q\mathcal Lq/(q-yq')^2$
& $(v''+v a'b')/b'$ \\
Butterfly condition
& $\mathcal Lq\le0$
& $v''+v a'b'\ge0$ \\
\bottomrule
\end{tabular}
\end{center}

The unifying object is $\eta$, not the particular implied-volatility convention. In both models, implied-volatility slope becomes the Gaussian-normalized difference between an economically meaningful option-implied CDF and the standard normal CDF. Its differential transform $m$ then measures the corresponding density distortion.

The conventions differ in what happens after this common deformation has been identified. In Bachelier coordinates, reciprocal implied scale converts $m\ge0$ into the linear differential inequality $\mathcal Lq\le0$. In Black coordinates, the same density deformation becomes the familiar nonlinear butterfly factor, with the $a$ and $b$ coordinates furnishing two equivalent Gaussian representations under two natural measures.

\section{Conclusion}
\label{sec:conclusion}

The central object of this paper is the Gaussian CDF deformation
\begin{equation*}
  \eta(z)=\frac{h(z)-\Phi(z)}{\phi(z)}.
\end{equation*}
It gives implied-volatility skew a direct probabilistic meaning: $\eta$ measures the pointwise departure of the risk-neutral CDF from a Gaussian benchmark after the appropriate normalization. Its differential transform
\begin{equation*}
  m(z)=1+\eta'(z)-z\eta(z)
  =\frac{h'(z)}{\phi(z)}
\end{equation*}
measures the corresponding departure of the density. Thus
\begin{equation*}
  \eta
  \quad\longleftrightarrow\quad
  \text{CDF deformation},
  \qquad
  m
  \quad\longleftrightarrow\quad
  \text{density deformation}.
\end{equation*}
The Gaussian benchmark is simply $\eta=0$ and $m=1$.

This viewpoint turns the no-arbitrage inequality into a complete distributional characterization. The condition $m\ge0$ is equivalent to monotonicity of the option-implied CDF, and under the CDF boundary conditions $m$ is any nonnegative Gaussian-relative density of unit Gaussian-weighted mass. The Mills-ratio barriers are exactly the bounds required to keep the CDF between zero and one.

In the Bachelier convention, the deformation is
\begin{equation*}
  \eta(y)=s'(K(y)),
\end{equation*}
so total normal implied-volatility skew is itself the normalized CDF discrepancy. Reciprocal implied scale then produces the additional linearization
\begin{equation*}
  C''(K)\ge0
  \quad\Longleftrightarrow\quad
  q''(y)+yq'(y)-q(y)\le0.
\end{equation*}
In the Black convention,
\begin{equation*}
  \eta=v'(k)
\end{equation*}
plays the same role in both Fukasawa coordinates, under the pricing and share measures respectively, and its density deformation reproduces the standard Black butterfly factor.

The local-volatility formulas show that this interpretation survives at the surface level: the same density deformation that enforces static butterfly no-arbitrage also appears in the denominator of local variance. The normalized-coordinate framework therefore links implied-volatility slope, CDF displacement, density positivity, and local-volatility inversion through one object, $\eta$.

A separate companion paper develops a model-free projective option-price coordinate system. The present paper remains within Gaussian implied-volatility conventions and isolates the distributional meaning of their normalized slopes.

\end{document}